\newcommand{\setlrmargins}[2]{
  \hoffset -1in
  \textwidth              \paperwidth
  \advance\textwidth      -#1             % left
  \advance\textwidth      -#2             % right
  \oddsidemargin  #1
  \evensidemargin #2
  }
\newtheorem{lemma}{Lemma}
\newtheorem{assumption}{Assumption}
\newtheorem{proposition}{Proposition}
\title{\LARGE \bf On the Role of Asymptomatic Carriers in Epidemic Spread Processes}
\author{Xiaoqi Bi and Carolyn L. Beck %and S. Dekhterman % <-this % stops a space
% \thanks{*This work was not supported by any organization}% <-this % stops a space
\thanks{Xiaoqi Bi and Carolyn L. Beck are with the Coordinated Science Laboratory and the ISE Department, University of Illinois at Urbana-Champaign, Urbana, IL 61801, Emails: $\{\text{xiaoqib2} | \text{beck3} \text{}\}$@illinois.edu
      }%
%\thanks{ is with the ISE Department, University of Illinois at Urbana-Champaign, Urbana, IL 61801, Email: $\{\text{}$@illinois.edu\}}
}
\begin{document}

\maketitle
%\thispagestyle{empty}
%\pagestyle{empty}

%%%%%%%%%%%%%%%%%%%%%%%%%%%%%%%%%%%%%%%%%%%%%%%%%%%%%%%%%%%%%%%%%%%%%%%%%%%%%%%%
\begin{abstract}
We present an epidemiological compartment model, SAIR(S), that explicitly captures the dynamics of asymptomatic infected individuals in an epidemic spread process. We first present a group model and then discuss networked versions. We provide an investigation of equilibria and stability properties for these models, and present simulation results illustrating the effects of asymptomatic-infected individuals on the spread of the disease. We also discuss local isolation effects on the epidemic dynamics in terms of the networked models. Finally, we provide initial parameter estimation results based on simple least-squares approaches and local test-site data.\par
Keywords: Epidemic dynamics, networks, data-informed modeling, stability analysis, parameter estimation
\end{abstract}
%%%%%%%%%%%%%%%%%%%%%%%%%%%%%%%%%%%%%%%%%%%%%%%%%%%%%%%%%%%%%%%%%%%%%%%%%%%%%%%%
\vspace*{-.05in}
\section{Introduction}
%Motivation, Prior related work, Contributions, Outline, Notation
Modeling, analysis and control of epidemic spread processes over networks have received increasing attention over the past decade, owing not only to the recent COVID-19 pandemic, but also to recent outbreaks of the related SARS and MERS viruses, the Zika and Ebola viruses, and as well the plethora of computer network viruses. Conducting experiments to analyze infectious disease spread processes and response policies are prohibitive for many reasons, and effectively impossible over large human contact networks. As a result, mathematical modeling and simulation, informed by up-to-date data, provides an essential alternative for estimating and predicting when and how an epidemic will spread over a network. Epidemic models can be used to predict and plan for viral extent, in particular after validating the models with data collected during actual outbreaks. Moreover, simulations of strategic control policies for validated epidemic models can provide insights into approaches for mitigating virus spread over networks.

Mathematical models for epidemics, or spread processes, have been proposed, analyzed and studied for over $200$ years \cite{Bernoulli}.  The base models for most studies today derive from the so-called {\em compartment models} proposed by Kermack and McKendrick in $1932$ \cite{kermack1932contributions}.  These models assume every subject lies in some segment or compartment of the population at any given time, with these compartments possibly including {\em susceptible} (S), {\em infected} (I), {\em exposed} (E) and/or {\em recovered} (R) population groups, leading to the classical epidemiological models: SI (susceptible-infected), SIS (susceptible-infected-susceptible), SIR (susceptible-infected-recovered) and SEIR (susceptible-exposed-infected-recovered) models. As one example, the Kermack and McKendrick SIS model is given by 
\begin{equation}\label{eq:SIS}
\begin{array}{rcl} 
\dot{S}(t) &=& -\beta S(t)I(t) + \delta I(t) \\
\dot{I}(t) &=& \beta S(t) I(t) - \delta I(t), \end{array}
\end{equation}
where $S(t)$ is the susceptible segment of the population, $I(t)$ is the infected segment of the population, $\beta$ represents the rate of infection or contact amongst infected and susceptible subgroups, and $\delta$ represents the healing or curing rate. This foundational model assumes: (1) a homogeneous population with no vital dynamics, that is birth and death processes are not included, meaning that infection and healing are assumed to occur at faster rates than vital dynamics and the population size is assumed to remain constant; and (2) the population mixes over a trivial network, or in other words, over a complete graph structure. These assumptions have led to errors in previous epidemic forecasts \cite{Scarpino19}. 

We note that similar models to that given in (\ref{eq:SIS}) have been derived for SI, SIR(S) and SEIR(S) processes; SI models simply have $\delta = 0$; SIR(S) models include a recovered segment of the population and a recovery rate $\gamma$; and SEIR(S) models include an exposed segment of the population and a corresponding parameter $\sigma$ capturing the rate at which an exposed individual transitions to the infected state; the exposed segment is typically assumed to be non-infectious with the accompanying rate parameter capturing the disease incubation period.
There are numerous variants of these models, including recent models in which human awareness is taken into account
\cite{Funk2009},\cite{Arenas_2013},\cite{paarporn2017networked},\cite{ji2017bivirusAware}, and in which multiple epidemic processes may be propagating simultaneously \cite{xu2012multi},\cite{acc_multi},\cite{PLBNB:Auto20}. 

Over the past two decades, both to address the discrepancies found in prior epidemic forecasts, and to better model spreading processes of computer viruses over communication networks, there has been an extensive study of epidemic processes evolving over complex network structures; see for example \cite{KephartWhite91}, \cite{Towsley05}, \cite{DraeifMassoulie10}, \cite{pastor2015epidemic}, and from a controls perspective \cite{NowzariPP16}.\footnote{The literature in this area is vast, thus we cannot provide an overview of all prior research due to space constraints. However, we note that the cited papers provide extensive summaries of existing results.} 
To account for network structure among members of a population, an agent-based perspective of epidemic processes is taken where each agent is represented by a node in the network, and the edges in the network between nodes represent the strength of the interaction between agents. Nodes in the network may represent either individuals or subgroups in the larger population.
Given a total of $n$ such nodes, epidemic processes can be described by large Markov process models (e.g., of dimension $2^n$ for SIS models and $3^n$ for SIR models), which capture the probability of each node transitioning from susceptible to infected, and/or to recovered states, and back. These probabilities are determined by the infection rate(s), healing rate(s) and/or recovery rate(s), in addition to the network interconnection structure, and capture the stochastic evolution of such epidemic processes. As $n$ increases, these models quickly become intractable to analyze due to their size, %over most realistic population models, 
at which point it is assumed that {\em mean-field approximation} (MFA) models are appropriate; these models are derived by taking expectations over infection transition rates of the agents and rely on the fundamental work of Feller \cite{Feller40} and Kurtz \cite{Kurtz81}.

For agents interconnected via a graph with adjacency matrix $W = [W_{ij}]$,% that is,
where element $W_{ij}$ defines the strength of the connection from node $i$ to node $j$, using the assumptions of large and constant agent population size along with additional independence assumptions, the deterministic networked MFA dynamic models are now widely applied; these models have been analyzed in detail and
shown to provide upper bounds on the probability of infection of a given agent at any given time (see \cite{VanMieghem09} and \cite{ChatterjeeD09} for discussions and perspectives). 
Again considering an SIS process example, denoting the probability of node $i$ being infected at time $t$ by $p_i(t) \in [0, 1]$, the following differential equation provides a MFA model of the evolution of the probabilities of infection of the nodes:
\vspace*{-.1in}
\begin{equation} \label{SISstatic}
\dot{p}_i(t) = (1 - p_i(t))\beta \sum_{j=1}^N W_{ij} p_j(t) - \delta p_i(t).
\end{equation}
%\vspace*{-.05in}
%where $\beta$ represents the (homogeneous) infection or transmission rate and $\delta$ the (homogeneous) healing rate; 
This model provides a lower complexity deterministic approximation to the full dimension Markov process model of a SIS spread process evolving over a static network.  
Further details can be found in \cite{FallMMNP07},\cite{NowzariPP16},\cite{PBNTCNS18}. 
Discrete time versions of these approximation models have also been proposed and studied in \cite{AhnHCDC13},\cite{WangSRDS03}. 

The primary goals in most analyses of epidemic process dynamics include computing the system equilibria, and determining the convergence behavior of these processes near the equilibria.  Specifically, conditions for the existence of and convergence to ``disease-free'' or ``endemic'' equilibria are sought. For (\ref{SISstatic}), it is straightforward to see that the disease free state, $p_i^* = 0$ for all $i \in \{1, \ldots, N\}$, is a trivial equilibrium of the dynamics. It has been shown that this equilibrium %$p_i^* = 0$ for all $i \in \{1, 2, \ldots, N \}$ 
is globally asymptotically stable if and only if $\frac{\beta}{\delta} \leq \frac{1}{\lambda_{max}(W)}$, where $\lambda_{max}(W)$ represents the largest real-valued part of the eigenvalue of the matrix $W$. It has further been shown, however, that if $\frac{\beta}{\delta} > \frac{1}{\lambda_{max}(W)}$, then there exists another equilibrium that is (almost) globally asymptotically stable, with $p_i^* \in (0,1)$ for all $i \in \{1, 2, \ldots, n \}$, implying the system converges asymptotically to an endemic state \cite{AhnHCDC13},\cite{KBGACC14},\cite{KBGCDC14},\cite{NPPCDC2014}. 

In this paper we consider a compartment model structure that specifically accounts for {\em infectious but asymptomatic} subgroups or individuals, namely a SAIR(S) model structure, incorporating Susceptible(S), Asymptomatic-infected(A), Infected-symptomatic(I), and Recovered(R) subsets of the population. We note that the asymptomatic subset we consider may include those individuals who do not experience symptoms through the course of their infection, as well as pre-symptomatic individuals. This structure may be used to directly capture the dynamics of COVID-19 and the role asymptomatic individuals play in the disease spread process; % as has been noted in the medical and scientific literature \cite{}; 
this model was first introduced for this purpose in public online seminars and panel discussions \cite{HCESCSummit},\cite{NeTs}, and in the literature in \cite{ARC-PBB}. Compartment models with different structures but including explicit asymptomatic population segments were previously proposed for dengue fever \cite{Grunnill2018} and rumor spreading over online social networks \cite{Zhu20}. We also note that simultaneous work has recently been completed is discussed in \cite{liu2020new}.

In the remainder of the paper, we first present the specific SAIR(S) group and networked models we will consider throughout, and discuss the equilibria and stability properties of these models in Section \ref{sec:SAIRS}.  We then discuss a series of simulation studies in Section \ref{sec:Simula} that illustrate our stability results as well as highlighting the role the asymptomatic subgroup plays in disease spread under various quarantine policies made with and without awareness of asymptomatic status. In Section \ref{sec:Estimate}, we provide a simple least squares estimation approach to compute the SAIR(S) model parameters from data, using local data (Champaign County Public Health District) for initial estimations; we discuss the challenges the currently available data present and our ongoing and future work in Section \ref{sec:Future}.

%The notation we use is standard. ...

%\section{Epidemic compartment models: brief background and notation}
%Include networked models here
\vspace*{-.15in}
\section{The SAIRS model}\label{sec:SAIRS}

%Focus on inclusion of asymptomatic part (most control policies will neglect unless there is widespread testing like at U of I) 
%\\Include continuous, discrete and networked
In order to investigate the effects of asymptomatic individuals on the spread of the epidemic, we consider the effects of a proportion of the infected subgroup being asymptomatic and potentially unaware of their carrier status. We evaluate both single group models as well as networked models, providing equilibria and stability analyses. 

\vspace*{-.1in}
\subsection{Single-Group and Networked Models}
Let $S(t),A(t),I(t),R(t)$,respectively, represent the proportion of susceptible, asymptomatic-infected, symptomatic-infected, and recovered individuals at time $t$. The Group SAIR(S) model we consider is characterized as:
\begin{align}
    &\dot{S}(t)=-\beta S(t)(A(t)+I(t))+\delta R(t)\notag\\
    &\dot{A}(t)=q\beta S(t)(A(t)+I(t))-\sigma A(t)-\kappa A(t)\label{eq:groupSAIR}\\
    &\dot{I}(t)=(1-q)\beta S(t)(A(t)+I(t)) +\sigma A(t)-\gamma I(t)\notag\\
    &\dot{R}(t)=\kappa A(t)+\gamma I(t)-\delta R(t)\notag
\end{align}
Here $\beta$ is the transmission rate amongst susceptible and infected groups, the latter of which includes both asymptomatic and symptomatic; $\kappa$ and $\gamma$, respectively, are the recovery rates for asymptomatic-infected and symptomatic-infected groups. Additionally, $q$ captures the proportion of individuals who are asymptomatic (and/or pre-symptomatic) but still infectious; correspondingly, $(1-q)$ represents the proportion of symptomatic individuals. Further, $\sigma$ is the progression rate from asymptomatic to symptomatic, and $\delta$ represents the rate at which immunity recedes. When $\delta=0$, individuals gain permanent immunity to the infection upon recovery. We assume these relations hold for all $t\geq 0$. \par

We also study the SAIR(S) model dynamics of n-subpopulations interconnected over an arbitrary network structure, with adjacency matrix denoted by $W$. Define $s_{i}, a_{i}, p_{i}, r_{i}$, respectively, as the proportion of the subpopulation $i$ that is susceptible (or healthy), asymptomatic-infected, symptomatic-infected, or recovered. The Networked SAIR(S) model (N-SAIR(S)) capturing the spread process over an arbitrary interconnection network is given by:
\begin{equation}\label{nsair-ct}
\begin{array}{lcl} 
\dot{s}_{i}(t)=-\beta_{i} s_{i}(t) \sum_{j}W_{ij}( a_{j}(t)+ p_{j}(t))+\delta_{i} r_{i}(t)\\
\dot{a}_{i}(t)=q\beta_{i} s_{i}(t) \sum_{j}W_{ij}( a_{j}(t)+ p_{j}(t))-\sigma_{i} a_{i}(t)-\kappa_{i}a_{i}(t)\\
\dot{p}_{i}(t)=(1-q)\beta_{i} s_{i}(t) \sum_{j}W_{ij}( a_{j}(t)+ p_{j}(t))+\sigma_{i} a_{i}(t)-\gamma_{i} p_{i}(t)\\
\dot{r}_{i}(t)=\kappa_{i} a_{i}(t)+\gamma_{i} p_{i}(t)-\delta_{i} r_{i}(t)
\end{array}
\end{equation}
where, similar to the Group Model (\ref{eq:groupSAIR}), for a subpopulation $i$, $\beta_{i}$ is the agent-to-agent transmission rate; $\kappa_{i}$ and $\gamma_{i}$, respectively, are the recovery rates for asymptomatic-infected and symptomatic-infected subsets; again, $\sigma_{i}$ represents the transition rate from asymptomatic to symptomatic infected; and $\delta_{i}$ represents the rate at which individuals may be susceptible to reinfection again after recovery. Since all individuals in a subgroup $i$ will reside in one of these subsets, we have  $s_{i}(t)+a_{i}(t)+p_{i}(t)+r_{i}(t)=1$, over all $i$, relative to the population size, $N_i$ of group $i$.\\
\textbf{Remark:} In the case where we have homogeneous spread parameters and the underlying network topology is complete with evenly distributed interconnection weights, that is, when $W_{ij}=1/n$ for all $i,j \in [n]$, and $(\beta_{i},\kappa_{i},\gamma_{i},\sigma_{i},\delta_{i})=(\beta, \kappa, \gamma, \sigma, \delta)$ for all $i \in [n]$, the Group Model (\ref{eq:groupSAIR}) and the Networked Model (\ref{nsair-ct}) coincide.

Prior to discussing the analysis of equilibria and stability for these models, we note the following result which establishes that the N-SAIR(S) model is well-defined. This result was first presented in \cite{ARC-PBB} for the discrete-time case using an induction argument; it is straightforward to adapt this result to the continuous-time model given in (\ref{nsair-ct}). We first state our assumption on the model parameters.

\begin{assumption}\label{assm:nsair-ct}
For all $i,j \in [n]$, we have $\beta_i,\;\gamma_i,\;\delta_i,\; \sigma_i,\; \delta_i,\;W_{ij}\geq0$, $0\leq q \leq 1$.

\end{assumption}
%XIAOQI - WE NEED TO CHECK THESE ASSUMPTIONS AND THE NEXT LEMMA.  I JUST CUT-AND-PASTED FOR NOW SO THEY LIKELY NEED TO BE MADE CONSISTENT WITH CONTINUOUS TIME DYNAMICS and ACCOUNT FOR $\delta$?

\begin{lemma} \label{lem:sair}
Consider the model in (\ref{nsair-ct}) under Assumption \ref{assm:nsair-ct}. Suppose $s_i(0), a_i(0), p_i(0),r_i(0) \in [0,1]$, $s_i(0) + a_i(0) +  p_i(0) + r_i(0) = 1$ for all $i \in [n] $. 
Then, for all $t \geq 0$ and $i \in [n]$, we have
$s_i(t), a_i(t), p_i(t), r_i(t) \in [0, 1]$ and $s_i(t) + a_i(t) + p_i(t) + r_i(t) = 1$. 
\end{lemma}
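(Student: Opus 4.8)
The plan is to establish the two conclusions separately: first the conservation of each group's total population, and then the forward invariance of the nonnegative orthant. The containment in $[0,1]$ will follow immediately by combining these two facts, so the bulk of the work lies in the nonnegativity step.

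For the conservation law I would fix $i$ and sum the four equations in (\ref{nsair-ct}). Adding $\dot{s}_i + \dot{a}_i + \dot{p}_i + \dot{r}_i$, every transmission term $\beta_i s_i \sum_j W_{ij}(a_j + p_j)$ cancels because its coefficients $-1,\,q,\,(1-q),\,0$ sum to zero; similarly the pairs $\pm\sigma_i a_i$, $\pm\kappa_i a_i$, $\pm\gamma_i p_i$, and $\pm\delta_i r_i$ each cancel. Hence $\frac{d}{dt}\big(s_i + a_i + p_i + r_i\big) = 0$ for every $i$, and since the total equals $1$ at $t=0$ it remains $1$ for all $t \geq 0$.

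For nonnegativity I would show that the vector field in (\ref{nsair-ct}) is subtangential to the nonnegative orthant $\mathbb{R}^{4n}_{\geq 0}$ along its boundary, so that a Nagumo-type invariance argument applies. The required face inequalities are direct: when $s_i=0$ we get $\dot{s}_i = \delta_i r_i \geq 0$; when $a_i=0$ we get $\dot{a}_i = q\beta_i s_i \sum_j W_{ij}(a_j+p_j) \geq 0$; when $p_i=0$ we get $\dot{p}_i = (1-q)\beta_i s_i \sum_j W_{ij}(a_j+p_j) + \sigma_i a_i \geq 0$; and when $r_i=0$ we get $\dot{r}_i = \kappa_i a_i + \gamma_i p_i \geq 0$. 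Under Assumption \ref{assm:nsair-ct} each right-hand side is nonnegative whenever the remaining coordinates are nonnegative, which is exactly the situation at any boundary point of the orthant. Equivalently, one may argue via a first-exit time $t^*$: at the first instant some coordinate would turn negative it must equal zero while, by minimality of $t^*$, all coordinates are still nonnegative, so the matching face inequality forces a nonnegative derivative there and contradicts the exit. Either way, the orthant is forward invariant.

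Combining the two results yields the claim: the four nonnegative quantities sum to $1$, so each lies in $[0,1]$. I expect the nonnegativity step to be the main obstacle rather than the conservation step, and the genuine delicacy within it is that the inward-pointing property of each coordinate face is \emph{contingent} on the nonnegativity of the other, network-coupled coordinates entering $\sum_j W_{ij}(a_j+p_j)$. The simultaneous first-exit-time formulation is what dissolves this coupling, since at $t^*$ every coordinate is guaranteed nonnegative; the only borderline case, a vanishing derivative at $t^*$, is handled by the standard device of perturbing the field by a small $\varepsilon>0$, proving invariance for the perturbed system, and letting $\varepsilon \to 0$.
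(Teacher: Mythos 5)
Your proposal is correct and follows essentially the same route as the paper: a conservation argument showing $\tfrac{d}{dt}(s_i+a_i+p_i+r_i)=0$, followed by the same face inequalities ($\dot{s}_i=\delta_i r_i\geq 0$ when $s_i=0$, etc.) to establish forward invariance of the nonnegative orthant. Your treatment is marginally tighter in two respects — you handle the degenerate case of a vanishing boundary derivative via the standard $\varepsilon$-perturbation, where the paper's iterated-interval argument glosses over the possibility that the intervals $[0,T_{s_i}]$ degenerate, and you obtain the upper bound $\leq 1$ directly from nonnegativity plus the unit sum rather than re-checking derivative signs on the faces where a coordinate equals $1$ — but these are refinements of the same argument, not a different one.
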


\begin{proof}
We show that for all $i \in [n]$, when $s_i(t)=0, \;a_i(t)=0, \;p_i(t)=0, \;r_i(t)=0$, then for $t\geq0$ we have $\dot{s_i}(t)\geq0, \;\dot{a_i}(t)\geq0, \;\dot{p_i}(t)\geq0, \;\dot{r_i}(t)\geq0$; and when $s_i(t)=1, \;a_i(t)=1, \;p_i(t)=1, \;r_i(t)=1$, then $\dot{s_i}(t)\leq0, \;\dot{a_i}(t)\leq0, \;\dot{p_i}(t)\leq0, \;\dot{r_i}(t)\leq0$.

From  $s_i(0)+a_i(0)+p_i(0)+r_i(0) = 1$, and $\dot{s_i}(t)+\dot{a_i}(t)+\dot{p_i}(t)+\dot{r_i}(t)= 0$, we have $s_i(t)+a_i(t)+ p_i(t)+r_i(t) = 1, \forall i \in [n], \forall t\geq0.$

By Assumption \ref{assm:nsair-ct} and (\ref{nsair-ct}), for all $i \in [n]$, if $s_i(0)=0$, we have $\dot{s_i}(0)=\delta_{i} r_{i}(0)\geq0$. By the continuity of $s_i(t)$, there exist $T_{s_i}\geq0$, such that, over the time interval $0\leq t \leq T_{s_i}$, $s_i(t)\geq0$. Similarly, $\dot{a_i}(0)=q\beta_{i} s_{i}(0) \sum_{j}W_{ij}( a_{j}(0)+ p_{j}(0))\geq 0$  if $a_i(0)=0$;  $\dot{p}_{i}(0)=(1-q)\beta_{i} s_{i}(0) \sum_{j}W_{ij}( a_{j}(0)+ p_{j}(0))+\sigma_{i} a_{i}(0)\geq 0$ if $p_i(0)=0$;  $\dot{r}_{i}(0)=\kappa_{i} a_{i}(0)+\gamma_{i} p_{i}(0)\geq 0$ if $r_i(0)=0$. Thus, there exist $T_{a_i}\geq0, T_{p_i}\geq0, T_{r_i}\geq0$, respectively, such that $a_i(t)\geq0$ for interval $0\leq t \leq T_{a_i}$; $p_i(t)\geq0$ for interval $0\leq t \leq T_{p_i}$; $r_i(t)\geq0$ over the time interval $0\leq t \leq T_{r_i}$.

Define $T_i:=min (T_{s_i},T_{a_i},T_{p_i},T_{r_i})$ for $i \in [n]$, and let $T=\min_{i \in [n]}T_i$. Then, at time $T, s_i(T)\geq0, a_i(T)\geq0, p_i(T)\geq0, r_i(T)\geq0, \forall i \in [n]$. Similar to the proof above, $\dot{s_i}(T)=\delta_{i} r_{i}(T)\geq0$  if $s_i(T)=0$; $\dot{a_i}(T)=q\beta_{i} s_{i}(T) \sum_{j}W_{ij}( a_{j}(T)+ p_{j}(T))\geq 0$  if $a_i(T)=0$; $\dot{p}_{i}(T)=(1-q)\beta_{i} s_{i}(T) \sum_{j}W_{ij}( a_{j}(T)+ p_{j}(T))+\sigma_{i} a_{i}(T)\geq 0$ if $p_i(T)=0$;  $\dot{r}_{i}(T)=\kappa_{i} a_{i}(T)+\gamma_{i} p_{i}(T)\geq 0$ if $r_i(T)=0$. Thus, for all $t\geq 0$ such that $s_i(t)=0, \;a_i(t)=0, \;p_i(t)=0$ or $r_i(t)=0$, $\dot{s_i}(t)\geq0, \;\dot{a_i}(t)\geq0, \;\dot{p_i}(t)\geq0, \;\dot{r_i}(t)\geq0$, respectively. This further suggests that, for all $i \in [n]$, $s_i(t)\geq0,\; a_i(t)\geq0,\; p_i(t)\geq0,\; r_i(t)\geq0$ for all $t\geq0$.

Next, we prove that $\dot{s_i}(t)\leq0, \;\dot{a_i}(t)\leq0, \;\dot{p_i}(t)\leq0, \;\dot{r_i}(t)\leq0$ when $s_i(t)=1, \;a_i(t)=1, \;p_i(t)=1, \;r_i(t)=1$, respectively. By Assumption \ref{assm:nsair-ct}, $s_i(t) + a_i(t) +  p_i(t) + r_i(t) = 1$, and $s_i(t), a_i(t), p_i(t), r_i(t)\geq0, \forall i \in [n]$, when $s_i(t)=1$, we have $a_i(t)=0, p_i(t)=0, r_i(t)=0$, which leads to $\dot{s_i}(t)=-\beta_{i}\sum_{j}W_{ij}( a_{j}(t)+ p_{j}(t))\leq0$. Similarly,  $\dot{a_i}(t)=-\sigma_{i}-\kappa_{i}\leq0$ when $a_i(t)=1$; $\dot{p_i}(t)=-\gamma_{i}\leq0$ when $p_i(t)=1$;  $\dot{r_i}(t)=-\delta_{i}\leq0$ when $r_i(t)=1$. Thus, we have, $s_i(t)\leq1,\; a_i(t)\leq1,\; p_i(t)\leq1,\; r_i(t)\leq1, \forall i \in [n], \forall t\geq 0.$
\end{proof}
%\vspace*{-.1in}

\subsection{Equilibria and stability}
To quantitatively and qualitatively evaluate the propagation of the virus, the basic reproduction number, $R_{0}$, is a critical threshold quantity used widely in epidemiological studies. This number indicates how rapidly infected individuals transmit the virus to healthy individuals. In order to stop the virus from spreading exponentially, we want $R_{0}<1$. In this section, we evaluate the SAIR(S) system equilibria and conduct stability analysis around the equilibria, leading us to a stabilizing $R_{0}$ value. We first consider the group model. 

\vspace*{.1in}
\subsubsection{Group Model SAIRS}\vspace*{.1in}
\hfill\\
Noting that $S(t)=1-A(t)-I(t)-R(t)$, the nonlinear system (\ref{eq:groupSAIR}) can be written as:
\begin{small}
{\begin{align}
    &\dot{A}(t)=q\beta (1-A(t)-I(t)-R(t))(A(t)+I(t))-\sigma A(t)-\kappa   A(t)\notag\\
    &\dot{I}(t)=(1-q)\beta (1-A(t)-I(t)-R(t))(A(t)+I(t)) +\sigma A(t)-\gamma I(t)\label{eq:AIR}\\
    &\dot{R}(t)=\kappa A(t)+\gamma I(t) - \delta R(t)\notag
\end{align}}
\end{small}
By setting $\dot{A}(t), \dot{I}(t), \dot{R}(t)$ to $0$, we can see immediately that an equilibrium state of system (\ref{eq:AIR}) is given by $(A^e, I^e, R^e)=(0, 0, 0)$ with $S^e = 1$. This is the disease-free equlibrium (DFE) in the case of non-permanent immunity. 
Linearizing system (\ref{eq:AIR}) around $(A^e, I^e, R^e)$, we obtain the system Jacobian matrix, 
\begin{equation} \label{jacob1}
J^e = \left[ \begin{array}{ccc}
q\beta-\kappa-\sigma & q\beta & 0\\
(1-q)\beta+\sigma & (1-q)\beta-\gamma & 0\\
\kappa & \gamma & -\delta  \end{array} \right].
\end{equation}
Applying Theorem 4.7 from \cite{khalil}, we know the system will be globally asymptotically stable around the DFE if all eigenvalues of $J^e$ have negative real parts. Computing the characteristic polynomial for $J^e$, we have after some straightforward manipulations,
{\small \begin{equation} \label{stab_jacob}
\begin{array}{l} \det(\lambda I - J^e) = (\lambda + \delta) \cdot
%\hspace*{.5em}
\left[(\lambda -q\beta + \kappa + \sigma)(\lambda - (1-q)\beta + \gamma) - q(1-q)\beta^2 - q\beta \sigma \right] \end{array}
\end{equation}}
Applying the Routh-Hurwitz criterion to the second order polynomial in the second line of (\ref{stab_jacob}) gives us the following.
\begin{proposition}For the system given by (\ref{eq:AIR}), the DFE $(S^e, A^e, I^e, R^e) = (1, 0, 0, 0)$ is globally asymptotically stable (GAS) when
\begin{equation}\label{R01}
R_{0}:= \max\left( \frac{\beta}{\kappa+\gamma+\sigma}, \frac{\beta(q\gamma+(1-q)\kappa+\sigma)}{\gamma(\kappa+\sigma)}  \right) <1. 
\end{equation}
\end{proposition}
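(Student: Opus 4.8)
The statement asserts \emph{global} asymptotic stability, so the preceding linearization (which, via the eigenvalues of $J^e$ and Routh--Hurwitz, certifies only \emph{local} stability) must be supplemented by a genuinely global argument. The plan is to build an explicit Lyapunov function on the feasible state space. By Lemma~\ref{lem:sair} the simplex $\mathcal{D}=\{(A,I,R):A,I,R\ge 0,\ A+I+R\le 1\}$ is compact and positively invariant, and on it the key inequality $S=1-A-I-R\le 1$ holds. First I would record what the paper's Routh--Hurwitz computation already shows about the $2\times 2$ block of $J^e$ governing $(A,I)$,
\begin{equation*}
F=\begin{pmatrix} q\beta-\kappa-\sigma & q\beta\\ (1-q)\beta+\sigma & (1-q)\beta-\gamma\end{pmatrix},
\end{equation*}
namely that its off-diagonal entries are nonnegative (so $F$ is Metzler), and that $\operatorname{tr}F<0$ and $\det F>0$ are \emph{exactly} the conditions $\tfrac{\beta}{\kappa+\gamma+\sigma}<1$ and $\tfrac{\beta(q\gamma+(1-q)\kappa+\sigma)}{\gamma(\kappa+\sigma)}<1$. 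Hence $R_0<1$ is equivalent to $F$ being a Hurwitz Metzler matrix.

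Because $F$ is both Hurwitz and Metzler, there exists a strictly positive vector $w=(w_1,w_2)^\top>0$ with $w^\top F<0$ componentwise (a standard property of stable Metzler matrices). For this $2\times2$ case I can exhibit $w$ directly: $\det F>0$ together with $\operatorname{tr}F<0$ forces both diagonal entries negative, and any ratio $w_1/w_2$ in the interval $\bigl(\tfrac{(1-q)\beta+\sigma}{\kappa+\sigma-q\beta},\ \tfrac{\gamma-(1-q)\beta}{q\beta}\bigr)$ works, this interval being nonempty precisely because $\det F>0$. The second step is to define, for a small constant $c>0$ to be fixed,
\begin{equation*}
V(A,I,R)=w_1 A+w_2 I+cR,
\end{equation*}
which is positive definite on $\mathcal{D}$ and vanishes only at the DFE.

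The third step is to differentiate $V$ along \eqref{eq:AIR} and dominate the nonlinearity. Every infection term carries the factor $\beta S(A+I)$ with the single positive coefficient $w_1q+w_2(1-q)$, and $\dot R$ contributes no infection term; using $S\le 1$ and $A+I\ge 0$ on $\mathcal{D}$ to replace $S$ by $1$ yields
\begin{equation*}
\dot V\le \bigl[(w^\top F)_1+c\kappa\bigr]A+\bigl[(w^\top F)_2+c\gamma\bigr]I-c\delta R.
\end{equation*}
Since $(w^\top F)_1,(w^\top F)_2<0$, choosing $c>0$ small keeps both bracketed coefficients negative while $-c\delta<0$, so $\dot V$ is negative definite on $\mathcal{D}$ and the standard Lyapunov theorem (\cite{khalil}) gives asymptotic stability with region of attraction all of $\mathcal{D}$, i.e.\ GAS on the feasible simplex. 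I expect the main obstacle to be this domination step in tandem with the choice of $w$ and $c$: one must verify that $w_1q+w_2(1-q)$ is indeed the \emph{only} coefficient multiplying $S$, so that bounding $S\le 1$ leaves the linear decay terms intact and reproduces exactly the rows of $w^\top F$, and then confirm the admissible range of $c$ is nonempty. A secondary point is that negativity of the $R$-coefficient requires $\delta>0$, consistent with the stated DFE being the case of non-permanent immunity; when $\delta=0$ one would instead invoke LaSalle's invariance principle to conclude convergence to the disease-free set $\{A=I=0\}$.
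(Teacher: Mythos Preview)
Your argument is correct and genuinely different from the paper's. The paper's derivation stops at the Jacobian: it computes the characteristic polynomial of $J^e$, invokes Routh--Hurwitz on the quadratic factor, and appeals to Theorem~4.7 of \cite{khalil} to conclude GAS from Hurwitz eigenvalues. Strictly speaking that theorem yields only \emph{local} asymptotic stability, so the paper's proof of the global claim has the very gap you flagged at the outset. Your construction---a linear Lyapunov function $V=w_1A+w_2I+cR$ with $w^\top>0$ chosen so that $w^\top F<0$ componentwise (using that $F$ is Metzler and Hurwitz exactly when $R_0<1$), followed by the domination $S\le1$ on the invariant simplex---closes that gap and delivers the global conclusion directly. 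The paper's route is shorter and makes the threshold condition \eqref{R01} drop out transparently from the coefficients of the quadratic factor; your route costs the extra step of producing $w$ and tuning $c$, but it actually earns the word ``global''. Your remark that $\delta>0$ is needed for strict negativity in the $R$-component (with LaSalle as the fallback for $\delta=0$) is also apt and matches how the paper separates the permanent-immunity case.
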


\vspace*{.1in}
Further, in the case where $\delta = 0$, that is when immunity following recovery from infection is permanent, the disease-free equilibria will be any points $(S^e, A^e, I^e, R^e)=(c_S, 0,0,c_R)$, where constants $c_R, c_S$ satisfy $c_S + c_R =1$. Analyzing the Jacobian for (\ref{eq:AIR}) in this case gives us that the equilibria $(S^e, A^e, I^e, R^e)=(c_S, 0,0,c_R)$ are globally asymptotically stable (GAS) again when (\ref{R01}) is satisfied.  That is, this basic reproduction number expression provides an appropriate threshold for determining when the spread process for the SAIR(S) model will or will not spread exponentially in either of the scenarios of permanent or non-permanent immunity.
%\begin{equation}\label{R02}
 %   R_{0}:= max(\frac{\beta}{\kappa+\gamma+\sigma}
 %  \frac{\beta(q\gamma+(1-q)\kappa+\sigma)}{\gamma(\kappa+\sigma)})<1. 
%\end{equation}\par

We further consider the case where the asymptomatic-infected and symptomatic-infected individuals have different infection transmission rates.
In the case of COVID-19, this difference could be partly due to the inability to conduct large-scale population testing allowing us to efficiently identify and isolate Asymptomatic individuals. Thus, we would have different quarantine control effectiveness over these two subpopulations.

In this case, we denote the infection transmission rates for agent-to-agent contact between the susceptible subgroup and the two infectious groups, resp., as $\beta_{A}, \beta_{I}$. As in the preceding analysis, we compute the Jacobian around the disease-free equilibrium $(S^e, A^e, I^e, R^e) = (1,0,0,0)$, as
\vspace*{-.1in}
\begin{equation} \label{jacob2}
J^e = \left[ \begin{array}{ccc}
q\beta_{A}-\kappa-\sigma & q\beta_{I} & 0\\
(1-q)\beta_{A}+\sigma & (1-q)\beta_{I}-\gamma & 0\\
\kappa & \gamma & -\delta
\end{array} \right].
\end{equation}

\vspace*{-.1in}
Following a similar approach as before yields:
\vspace*{-.1in}
\begin{small}
\begin{equation}\label{R03}
    R_{0} := \max \left( \frac{q\beta_{A}+(1-q)\beta_{I}}{\kappa+\gamma+\sigma},\frac{q\beta_{A}\gamma+\beta_{I}((1-q)\kappa+\sigma)}{\gamma(\kappa+\sigma)} \right)
\end{equation}
\end{small}

\vspace*{-.1in}
For GAS, again it is required that $R_0 < 1$.  

We can further compute an endemic equilibrium point for (\ref{eq:AIR}). In this case, we assume non-permanent immunity, that is, $\delta > 0$. Setting $\dot{A}(t), \dot{I}(t), \dot{R}(t)$ to $0$, we obtain the endemic equilibrium,  

\begin{small}
\begin{equation} \label{eq:endemic1}
\left[ \begin{array}{c} S^e \\ A^e \\ I^e \\ R^e \end{array} \right] = \left[ \begin{array}{c} \frac{\gamma(\kappa+\sigma)}{\beta(q\gamma+(1-q)\kappa+\sigma)}\\
\frac{q\delta\gamma\Big(\beta(q\gamma+(1-q)\kappa+\sigma)-\gamma(\kappa+\sigma)\Big)}{\beta(q\gamma+(1-q)\kappa+\sigma)\Big(\gamma(\kappa+\sigma)+\delta(q\gamma+(1-q)\kappa+\sigma)\Big)} \\
\frac{\delta((1-q)\kappa+\sigma)\Big(\beta(q\gamma+(1-q)\kappa+\sigma)-\gamma(\kappa+\sigma)\Big)}{\beta(q\gamma+(1-q)\kappa+\sigma)\Big(\gamma(\kappa+\sigma)+\delta(q\gamma+(1-q)\kappa+\sigma)\Big)}\\
\frac{\gamma(\kappa+\sigma)\Big(\beta(q\gamma+(1-q)\kappa+\sigma)-\gamma(\kappa+\sigma)\Big)}{\beta(q\gamma+(1-q)\kappa+\sigma)\Big(\gamma(\kappa+\sigma)+\delta(q\gamma+(1-q)\kappa+\sigma)\Big)} \end{array} \right]
\end{equation}
\end{small}

Denoting $\Psi =\gamma(\kappa+\sigma)$ and  $\Phi=q\gamma+(1-q)\kappa+\sigma$ and noting both $\Psi >0$ and $\Phi >0$, and further defining \begin{align*}
     C&=\beta\Phi-\Psi=\beta(q\gamma+(1-q)\kappa+\sigma)-\gamma(\kappa+\sigma),\\
     D&=\delta\Phi+\Psi=\delta(q\gamma+(1-q)\kappa+\sigma)+\gamma(\kappa+\sigma)>0,
\end{align*}
then the endemic equilibrium can be written as
\begin{equation*} 
\left[\begin{array}{c} S^e \\ A^e \\ I^e \\ R^e \end{array} \right] = \left[\arraycolsep=1.4pt\def\arraystretch{1.6} \begin{array}{c} \frac{\Psi}{\beta\Phi}\\
\frac{q\delta\gamma(\beta\Phi-\Psi)}{\beta\Phi(\delta\Phi+\Psi)} \\
\frac{\delta((1-q)\kappa+\sigma)(\beta\Phi-\Psi)}{\beta\Phi(\delta\Phi+\Psi)}\\
\frac{\Psi(\beta\Phi-\Psi)}{\beta\Phi(\delta\Phi+\Psi)} \end{array} \right]= \left[\arraycolsep=1.4pt\def\arraystretch{1.6} \begin{array}{c} \frac{\Psi}{\beta\Phi}\\
\frac{q\delta\gamma C}{\beta\Phi D} \\
\frac{\delta((1-q)\kappa+\sigma)C}{\beta\Phi D}\\
\frac{\Psi C}{\beta\Phi D} \end{array} \right].
\end{equation*}
From this expression for the endemic equilibrium point we compute the Jacobian around the equilibrium point as:
\begin{equation*} 
J^e = \left[\arraycolsep=3pt\def\arraystretch{1.6} \begin{array}{ccc}
-\frac{(\kappa+\sigma)\big((1-q)\kappa+\sigma)\big)}{\Phi}-\frac{q\delta C}{D}  & \frac{q\Psi}{\Phi}-\frac{q\delta C}{D} & -\frac{q\delta C}{D}\\
\frac{(\gamma+\sigma)\big((1-q)\kappa+\sigma)\big)}{\Phi}-\frac{(1-q)\delta C}{D}  & -\frac{q\gamma(\gamma+\sigma)}{\Phi}-\frac{(1-q)\delta C}{D} & -\frac{(1-q)\delta C}{D}\\
\kappa & \gamma & -\delta
\end{array} \right].
\end{equation*}
Letting $F=\gamma+\kappa+\sigma$, for GAS of the endemic equilibrium we require: $$C>0,$$ and 
\begin{footnotesize}
\begin{equation*}
    \frac{(CD+D^2)(\delta+F)\Phi(F\Phi-\Psi)-D^2\Psi(F\Phi-\Psi)+\delta(\delta+F)(C^2+CD)\Phi^2+\delta CD\Phi(F\Phi-\Phi^2-\Psi)}{CD\Psi\Phi^2}>1.
\end{equation*}
\end{footnotesize}
%??? xxx Xiaoqi, is it possible to evaluate the jacobian at this point and compute eigenvalues using symbolic matlab or mathematica? xxx \\

\subsubsection{Networked Model N-SAIR(S)}
\hfill\\
We now evaluate equilibria and their stability properties for the networked SAIR(S) models. First we consider the case with permanent immunity, i.e., $\delta=0$. Given $s_{i}(t)=1-a_{i}(t)-p_{i}(t)-r_{i}(t)$ for all $t\geq0,\; i\in[n]$, system (\ref{nsair-ct}) can be represented in matrix form as:
\begin{small}
\begin{align}
    \dot{a(t)}&=[q(I-A(t)-P(t)-R(t))BW-\Sigma-K]a(t)+q(I-A(t)-P(t)-R(t))BW p(t)\notag\\
    \dot{p(t)}&=[(1-q)(I-A(t)-P(t)-R(t))BW +\Sigma]a(t)+[(1-q)(I-A(t)-P(t)-R(t))BW-\Gamma] p(t)\label{NSAIRSmatrix}\\
    \dot{r(t)}&= Ka(t)+\Gamma p(t).\notag
\end{align}
\end{small}
Here, 
\begin{small}
\begin{equation*} 
a(t)=\left[ \begin{array}{c} 
a_{1}(t)\\
\vdots\\
a_{n}(t)
\end{array} \right], \;
p(t)=\left[ \begin{array}{c} 
p_{1}(t)\\
\vdots\\
p_{n}(t)
\end{array} \right], \;
r(t)=\left[ \begin{array}{c} 
r_{1}(t)\\
\vdots\\
r_{n}(t)
\end{array} \right],
\end{equation*}
\end{small}
with $n\times n$ matrices $A(t)=diag(a_{i}(t)), \; P(t)=diag(p_{i}(t)), \; R(t)=diag(r_{i}(t)), \; B=diag(\beta_{i}), \; K=diag(\kappa_{i}), \; \Gamma=diag(\gamma_{i}), \; \Sigma=diag(\sigma_{i}), \; \Delta=diag(\delta_{i})$, and adjacency matrix $W$.\par

Setting $\dot{a}(t),\dot{p}(t),\dot{r}(t)$ to 0, we can compute the equilibrium state where $a^e=p^e=\overline{0},r^e=\overline{r_{c}}$, where $\overline{r_{c}}$ is any non-negative constant vector with elements $r_{c_i}< 1$. Linearizing the system (\ref{NSAIRSmatrix}) at the equilibrium $(a^e, p^e,r^e)$, we obtain the $3n\times 3n$ system Jacobian Matrix given by
%\begin{footnotesize}
\begin{equation} \label{jacob3}
J^e = \left[ \begin{array}{ccc}
q(I-R_{c})BW-\Sigma-K & q(I-R_{c})BW & 0\\
(1-q)(I-R_{c})BW+\sigma & (1-q)(I-R_{c})BW-\Gamma & 0\\
K & \Gamma & -\Delta
\end{array} \right].
\end{equation}
%\end{footnotesize}
Analysis of this Jacobian matrix leads to a set of constraints on the spectrum of the weighting matrix $W$. An alternative approach is to consider a Lyapunov stability analysis approach. %Using a standard quadratic Lyapunov functional in the $a(t)$ and $p(t)$ vectors leads to the following sufficient condition for stability of the DFE:
Here we consider a quadratic Lyapunov function $V=a^{T}B^{-1}a+p^{T}B^{-1}p$. Computing the derivative, we can easily show 
%\vspace{-.1in}
%\begin{footnotesize}
\begin{equation}%\label{upbdd}
\dot{V} \leq a^{T}[qW-B^{-1}(\Sigma+K)]a+p^{T}[(1-q)W-b^{-1}\Gamma]p+a^{T}(W+B^{-1}\Sigma)p.
\end{equation}
%\end{footnotesize}
For GAS, we require $\dot{V}<0$ for all $t\geq0$, which after some algebraic manipulations can be shown to be equivalent to the inequality
\begin{small}
\begin{equation}\label{Lyapunov}
%\begin{array}{cc}
    \left[ \begin{array}{cc}a^{T}&p^T\end{array}\right]\left[\begin{array}{cc}qW&\frac{1}{2}W\\\frac{1}{2}W & (1-q)W\end{array}\right]\left[\begin{array}{c}a\\p\end{array}\right]<\left[\begin{array}{cc}a^{T}&p^T\end{array}\right]\left[\begin{array}{cc}B^{-1}(\Sigma+K)&-\frac{1}{2}B^{-1}\Sigma\\-\frac{1}{2}B^{-1}\Sigma& B^{-1}\Gamma\end{array}\right]\left[\begin{array}{c}a\\
    p\end{array}\right].
%\end{array}\\
\end{equation}
\end{small}
Applying the Rayleigh quotient we then have the following sufficient condition for the DFE:
{\small
\begin{equation} \label{NSAIRS-stability1}
\left[ \begin{array}{cc}
qW & \frac{1}{2}W\\
\frac{1}{2}W & (1-q)W\end{array} \right] \prec \left[ \begin{array}{cc}
B^{-1}(\Sigma+K) & -\frac{1}{2}B^{-1}\Sigma\\
-\frac{1}{2}B^{-1}\Sigma & B^{-1}\Gamma\end{array}\right],
\end{equation}}
where $\prec$ denotes relative definiteness of the matrices. That is, (\ref{NSAIRS-stability1}) provides a test that bounds the maximum eigenvalue of the $q$-scaled adjacency matrix $W$ in terms of the minimum eigenvalue of a matrix consisting of  diagonal block entries of ratios of healing and transition rates ($\kappa_i$, $\gamma_i$ and $\sigma_i$) to infection rates ($\beta_i$); this loosely generalizes 
the usual $R_0$ threshold to allow for heterogeneous infection parameters over multiple infection compartments.

\section{Simulations} \label{sec:Simula}
In this section, we first illustrate the role of the asymptomatic subgroup in the development of the epidemic, and then investigate the effects of quarantine and social distancing policies.\par
First, we simulate a baseline group/networked model (\ref{nsair-ct}), for which we assume homogeneous spread parameters and a five-subpopulation network structure % (see Figure \ref{Adjaciency Matrix}).
We assume the total population size is $10,000$ and the respective subpopulations denoted A, B, C, D, and E have populations $2000, 2500, 1500, 3500$, and $500$, respectively. We assume the cities are fully connected with evenly distributed edge weights, thus this baseline model is equivalent to a single group model. %The medium incubation period for COVID-19 is approximately 5 days \ref{Stephen20}. And on average, it takes approximately 9.5 days for a symptomatic-infected individual to recover.
We use the estimation results from local data (discussed further in Section \ref{sec:Estimate}) in addition to drawing upon the literature (e.g.,\cite{StephenA} \cite{Oran}) on COVID-19 to inform our parameter value selection, where we set $(q,\beta, \sigma, \gamma, \kappa, \delta)=(0.7, 0.25, 0.15, 0.11, 0.08, 0.0001).$
We further set the initial proportions of the $A, I, R$ compartments as
\begin{align*}
    a(0)&=(a_{A}(0),a_{B}(0),a_{C}(0),a_{D}(0),a_{E}(0))=(0.006,0.004,0.012,0.004,0.004)\notag\\
    p(0)&=(p_{A}(0),p_{B}(0),p_{C}(0),p_{D}(0),p_{E}(0))=(0.005,0.002,0.008,0.003,0.002)\notag\\
    r(0)&=(r_{A}(0),r_{B}(0),r_{C}(0),r_{D}(0),r_{E}(0))=(0.007,0.003,0.010,0.008,0.005)\notag
\end{align*}

Simulating the SAIRS model over $60$ days results in the epidemic progression shown in Fig.\ref{fig:Fig1}:
\begin{figure}[h]
\centering
\includegraphics[scale=0.35]{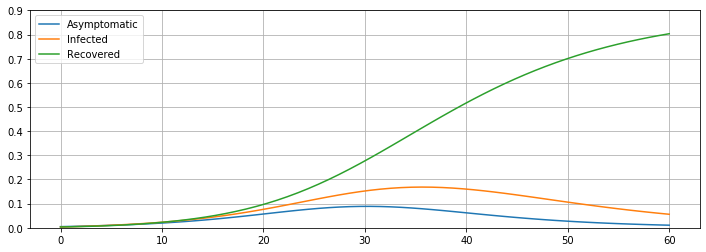}
\caption{Group/Network SAIRS Simulation: Baseline Model}
\label{fig:Fig1}
\end{figure}\\
Note that peak active infection occurs on day $33$, that is 
$p(t)+a(t)$ attains a maximum of approximately $28\%$ on day $t=33$. By day $60$, approximately $87\%$ of the entire population has been or is infected; assuming a mortality rate of $4\%$, that would correspond to $348$ deaths in the two month time span. Again we note this model assumes homogeneous mixing within the entire population. 

\subsection{Endemic equilibrium}
From Section \ref{sec:SAIRS}, we know that the healthy-state equilibrium is GAS given that 
\begin{equation}\label{R03}
    R_{0} := \max \left(\frac{q\beta _{A}+(1-q)\beta_{I}}{\kappa+\gamma+\sigma},\frac{q\beta_{A}\gamma+\beta_{I}((1-q)\kappa+\sigma)}{\gamma(\kappa+\sigma)} \right)<1.
\end{equation}
Alternatively when $\frac{q\beta_{A}\gamma+\beta_{I}((1-q)\kappa+\sigma)}{\gamma(\kappa+\sigma)}>1$, the system converges to the endemic equilibrium.

Setting the initial conditions as in the healthy-state equilibrium scenario, excepting a change in the parameter value of $delta$ from $0.0001$ to $0.01$, we simulate the SAIRS model over 500 days. The resulting infection progression plot is given in Figure \ref{fig:endemic}.
\begin{figure}[h]
\centering
\includegraphics[scale=0.35]{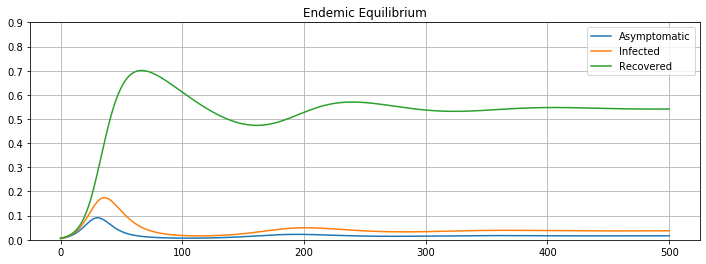}
\caption{Group/Network SAIRS Simulation: Endemic Equilibrium}
\label{fig:endemic}
\end{figure}
We note that peak infection happens on day $34$ with approximately $26\%$ of population being infected (both asymptomatic and symptomatic). At the endemic equilibrium, approximately $5.4\%$ of the population is actively infected. 

\subsection{Asymptomatic Effects}
One major obstacle in the control of COVID-19 is the challenge of identifying and monitoring individuals in the asymptomatic-infected subgroup. Herein we explore the impact of the asymptomatic subgroup on the epidemic evolution.

First, we assume there is no control imposed on either the asymptomatic or symptomatic infected subgroups. For simplicity, we use the group model (\ref{eq:groupSAIR}) with parameters
\begin{small}
\begin{equation}
    (q, \beta, \sigma, \gamma, \kappa, \delta)=(0.7, 0.25, 0.15, 0.11, 0.08, 0.0001)\notag
\end{equation}
\end{small}
which gives a basic reproduction number $R_{0} \approx 2.5$ from (\ref{R01}).\\
%SHOULDN'T THIS BE EXACTLY THE SAME AS THE BASELINE MODEL? MAYBE THE INITIAL CONDITIONS ARE SLIGHTLY DIFFERENT?
By setting initial proportions for the $A,I,R$ compartments for each subpopulation as 
\begin{equation*}
    (a(0),I(0),R(0))=(0.004,0.002,0.003),
\end{equation*}
we obtain the simulation results shown in Fig. \ref{fig:Fig 4}.
\begin{figure}[h!]
\centering
\includegraphics[scale=0.32]{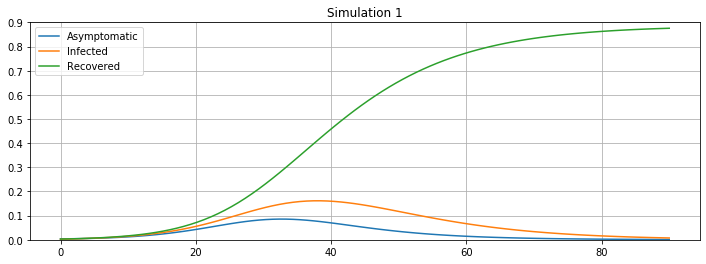}
\caption{No control policies in effect on either Asymptomatic or Symptomatic Infected subgroups}
\label{fig:Fig 4}
\end{figure}
The population reaches a peak infection level of approximately $25\%$ on day $35$. By day $80$, approximately $87\%$ of the population has been or is infected.

Next, we implement moderate and stringent isolation policies on only the symptomatic subgroup; this is effected in the simulations by changing the respective infection rate parameters of the subgroups, which we now denote individually by $\beta_A$ and $\beta_I$.  Imposing isolation policies on a subgroup effectively lowers the corresponding infection rate. The simulations results are shown in Fig. \ref{fig:Fig5}
\begin{figure}[h]
\begin{subfigure}{.5\textwidth}
  \centering
  % include first image
  \includegraphics[scale=0.32]{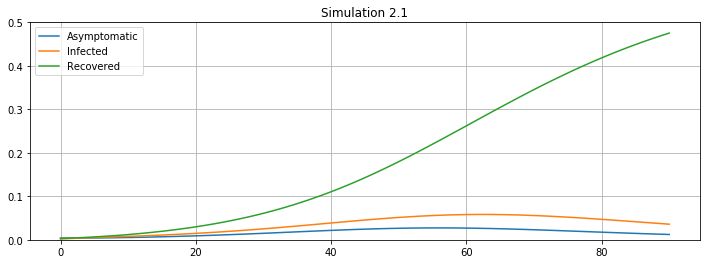}  
  \caption{Moderate isolation of the Symptomatic Infected subgroup; $\beta_{A}=0.25, \beta_{I}=0.11$ giving $R_{eff}=1.5$}
  \label{fig:5(a)}
\end{subfigure}
\begin{subfigure}{.5\textwidth}
  \centering
  % include second image
  \includegraphics[scale=0.32]{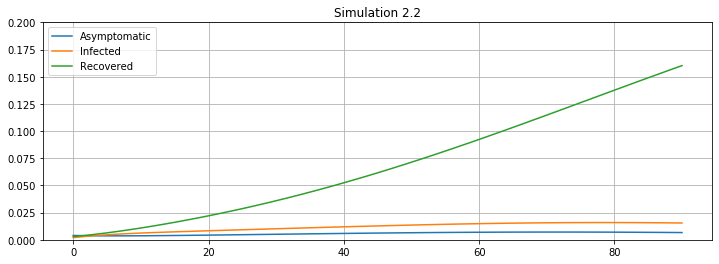}
  \caption{Stringent isolation of the Symptomatic Infected subgroup; $\beta_{A}=0.25, \beta_{I}=0.06$ giving $R_{eff}=1.2 $}
  \label{fig:5(b)}
\end{subfigure}
\caption{Imposing isolation policies on subgroup $I$}
\label{fig:Fig5}
\end{figure}
%Isolation, with and without knowledge of asymptomatic \\
%U of Illinois
We note that with isolation measures on only the symptomatic infected subgroup, the epidemic now progresses more slowly and mildly, as is expected, however there is still substantial infection in the population. The infection peaks at days $60$ and $75$, respectively, approximately $4-6$ weeks later than with no control. With moderate isolation policies in effect on the $I$ subgroup, the peak infection level is approximately $9\%$ and with strict isolation policies the peak infection level attained is approximately $2.5\%$. Finally by day $80$, the total percentages of the population that has been or is infected is approximately $49\%$ and $17\%$; with a mortality rate of $4\%$ this corresponds to $196$ and $68$ deaths, respectively.\par

Alternatively, we consider the situation where Asymptomatic individuals are also identified and isolated, under both moderate and stringent policies, with the results shown in Fig \ref{fig:Fig6}.
\begin{figure}[h]
\begin{subfigure}{.5\textwidth}
  \centering
  % include first image
  \includegraphics[scale=0.32]{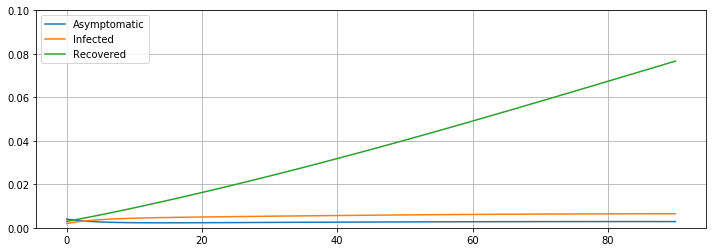}  
  \caption{Moderate isolation of both Symptomatic and Asymptomatic Infected subgroups; $\beta_{A}=0.11, \beta_{I}=0.11$ giving $R_{eff}=1.09$}
  \label{fig:6(a)}
\end{subfigure}
\begin{subfigure}{.5\textwidth}
  \centering
  % include second image
  \includegraphics[scale=0.32]{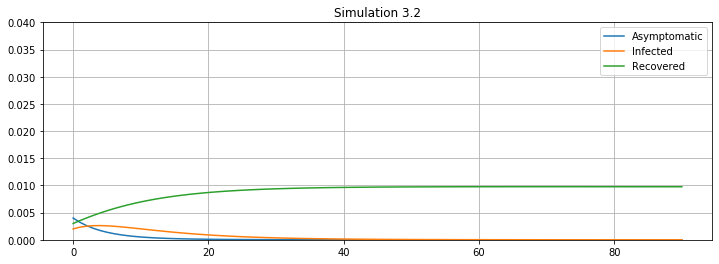}
  \caption{Stringent isolation of both Symptomatic and Asymptomatic Infected subgroups; $\beta_{A}=0.0125, \beta_{I}=0.0125$ giving $R_{eff}=0.12$}
  \label{fig:6(b)}
\end{subfigure}
\caption{Imposing isolation policies on subgroups $A$ and $I$}
\label{fig:Fig6}
\end{figure}\par
Note that, with only moderate isolation on both Asymptomatic and Symptomatic Infected groups (\ref{fig:6(a)}), the epidemic is under control within three months. By day $80$, approximately $7.7\%$ of the population has been or is infected, corresponding to a total of $770$ individuals in a population base of $10,000$ that have been infected; at a $4\%$ mortality rate this corresponds to approximately $31$ deaths as compared to approximately $68$ deaths with stringent control imposed on only the Symptomatic Infected group (\ref{fig:5(b)}). 

An additional perspective to consider is the effective reproduction number under the different isolation policies. Moderate isolation of both Asymptomatic and Symptomatic subgroups (\ref{fig:6(a)}) gives a $R_{eff}\approx 1.09$, while stringent isolation on just the Symptomatic subgroup (\ref{fig:5(b)}) gives a $R_{eff}\approx 1.2$.\par

These simulation results indicate that identification and isolation of Asymptomatic infected individuals is much more effective in curbing the spread of the epidemic than identification and isolation of just the Symptomatic subgroup. However, due to the voluntary nature of most testing regimens in the United States, this type of basic control has not been implemented. To achieve this goal, either regular extensive mandatory testing policies, or persistent extensive isolation of the whole population is required. 

\subsection{Network Effects}
Here, we evaluate the effects that a more realistic interaction structure has on epidemic spread over a population. We consider the $5$-node network introduced earlier, and consider the removal of some edges between nodes.  We first consider an interconnection network structure with adjacency matrix
{\small
\begin{equation} \label{Adjaciency Matrix}
W = \left[ \begin{array}{ccccc}
\frac{1}{3} & \frac{1}{3} & \frac{1}{3} & 0 & 0\\
\frac{1}{3} & \frac{1}{3} & 0 & 0 & \frac{1}{3}\\
\frac{1}{3} & 0 & \frac{1}{3} & \frac{1}{3} & 0\\
0 & 0 & \frac{1}{2} & \frac{1}{2} & 0\\
0 & \frac{1}{2} & 0 & 0 & \frac{1}{2}
\end{array} \right].
\end{equation}}
Using the same parameters and initial conditions as in the baseline model, our simulations return results for subpopulations C and E as shown in Fig.\ref{fig:Fig2}.
\begin{figure}[h]
\begin{subfigure}{.5\textwidth}
  \centering
  % include first image
  \includegraphics[scale=0.32]{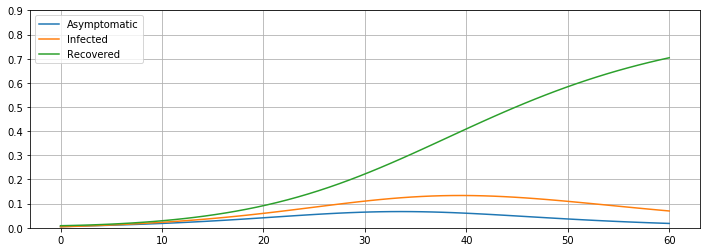}  
  \caption{Subpopulation C}
  \label{fig:sub-first}
\end{subfigure}
\begin{subfigure}{.5\textwidth}
  \centering
  % include second image
  \includegraphics[scale=0.32]{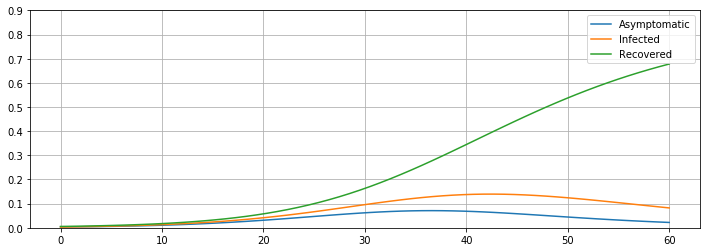}
  \caption{Subpopulation E}
  \label{fig:sub-second}
\end{subfigure}
\caption{Strongly Connected Network Simulation Results}
\label{fig:Fig2}
\end{figure}\par
%{\footnotesize Here are some notes that go with the graph. Here are some 
%notes that go with the graph. Here are some notes that go with the graph. 
%Here are some notes that go with the graph.\par}
\begin{figure*}[hbt!]
  \centering
  \subcaptionbox{Subpopulation 4}[.3\linewidth][c]{%
    \includegraphics[width=\linewidth]{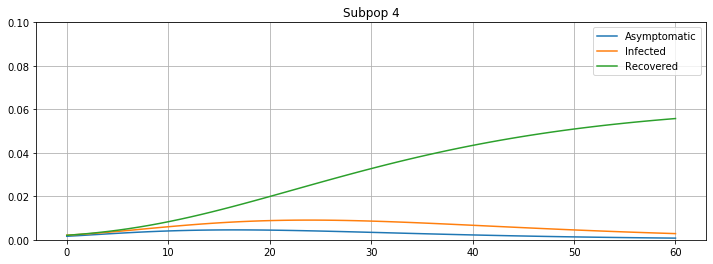}}\quad
  \subcaptionbox{Subpopulation 22}[.3\linewidth][c]{%
    \includegraphics[width=\linewidth]{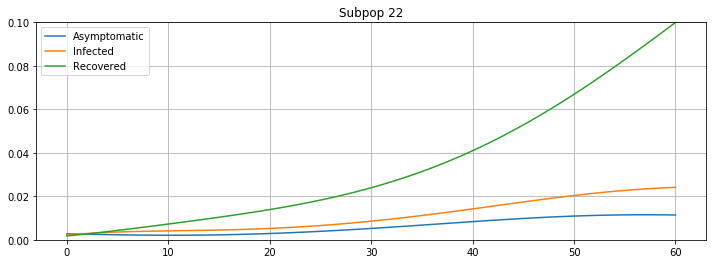}}\quad
  \subcaptionbox{Subpopulation 28}[.3\linewidth][c]{%
    \includegraphics[width=\linewidth]{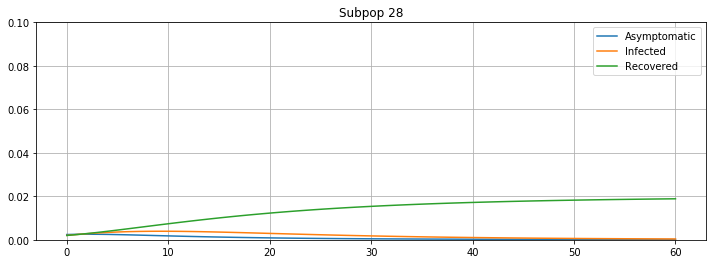}}

  \medskip

  \subcaptionbox{Subpopulation 32}[.3\linewidth][c]{%
    \includegraphics[width=\linewidth]{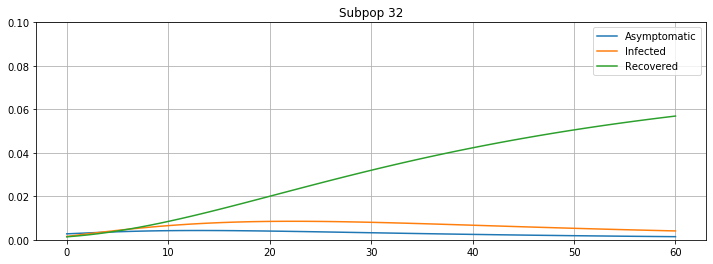}}\quad
  \subcaptionbox{Subpopulation 34}[.3\linewidth][c]{%
    \includegraphics[width=\linewidth]{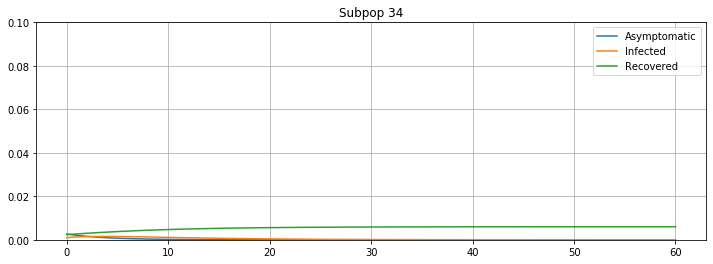}}\quad
  \subcaptionbox{Subpopulation 48}[.3\linewidth][c]{%
    \includegraphics[width=\linewidth]{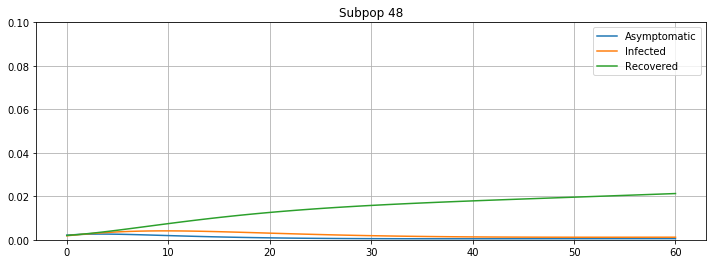}}
  \caption{Weakly Connected Network Simulation Results}
  \label{fig:Fig3}
\end{figure*}
With a less strongly connected network, the epidemic spreads more slowly and weakly. Subpopulation C reaches its peak infection level at day $37$, and subpopulation E at day $39$. By day $60$, approximately $83\%$ of city C population and $81\%$ city E population have been infected. However, in total, approximately $480$ fewer individuals over the five cities are infected as compared to the fully connected (i.e., complete) baseline model.\par

Next, to explore the impact of quarantine and stronger social distancing measures, we break the full population into $50$ smaller subpopulations, and generate a stochastic adjacency matrix with each node only connected to (randomly selected) $20$ nodes out of the total of $50$ group nodes.
%(DO YOU MEAN THE GRAPH IS NOT CONNECTED, THAT IS, ARE THERE NODES FOR WHICH THERE EXIST NO PATHS BETWEEN THEM? OR DO YOU MEAN THE IN-DEGREE OF ANY ONE NODE IS MAX OF 20? BE PRECISE!). 
We also generated the initial conditions randomly, i.e., $a(0), p(0), r(0)$, assuming $a_{i}(0)\sim \mathcal{N}(0.04,\,0.005)$, $p_{i}(0)\sim \mathcal{N}(0.02,\,0.005)$, $r_{i}(0)\sim \mathcal{N}(0.03,\,0.005)$, with these values restricted to be non-negative.
Randomly selecting $6$ of the $50$ sub-populations, we can demonstrate the simulation results as shown in Fig.\ref{fig:Fig3}:\par

Note that, with this more extensive isolation, the epidemic decays much faster than under the previous strongly connected network (Fig.\ref{fig:Fig2}). Subpopulations $4, 22, 28, 32, 34$ and $48$, respectively, reach their peak infection levels at days $21, 59, 7, 24, 0$ and day $8$. Among the six subpopulations in the sample, subpopulation $22$ is the most highly infected group. However, overall after $60$ days, approximately only $13.6\%$ of the population has been or is infected, which is a reduction of $73.4\%$ of the population compared to the fully connected network (Fig.\ref{fig:Fig1}), and a reduction of $67.7\%$ compared to the strongly connected network (Fig.\ref{fig:Fig2}). These simulations demonstrate that social distancing measures, such as quarantining within each community or family, does serve to control the spread of the epidemic, as has been seen in practice in many communities. From the perspective of the group model, extensive isolation policies help reduce the group transmission rate for person-to-person contact, which results in both faster flattening of the infection curve and fewer infected individuals (asymptomatic and symptomatic) in the whole population.

\section{Parameter estimation} \label{sec:Estimate}
As was noted in previous sections, we determined values for the N-SAIR(S) model parameters based on both estimation results using local data and similarly estimated values from the literature. In this section we discuss the simple least-squares approach we use for parameter estimation for a discrete-time N-SAIRS model, given in (\ref{NSAIRS-discrete}). We further present some of our initial estimation results using local data for COVID-19.

As our data comes from sampling on a daily basis, a discrete-time model would be better suited for estimating and evaluating model parameters. Thus, we first apply a forward Euler's method to the continuous-time networked system (\ref{nsair-ct}), giving us a discrete-time networked SAIRS model,
\begin{small}
\begin{align}
    &a_{i}^{k+1}=a_{i}^{k}+q\beta_{i}(1-a_{j}^k- p_{j}^k-r_{j}^k) \sum_{j}W_{ij}( a_{j}^k+ p_{j}^k)-\sigma_{i} a_{i}^{k}-\kappa_{i}a_{i}^{k}\notag\vspace{-.1in}\\
    &p_{i}^{k+1}=p_{i}^{k}+(1-q)\beta_{i}(1-a_{j}^k- p_{j}^k-r_{j}^k) \sum_{j}W_{ij}( a_{j}^k+ p_{j}^k)+\sigma_{i} a_{i}^{k}-\gamma_{i} p_{i}^{k}\notag{-.1in}\\
    &r_{i}^{k+1}=r_{i}^{k}+\kappa_{i} a_{i}^{k}+\gamma_{i} p_{i}^{k} -\delta_{i} p_{i}^{k}. \label{NSAIRS-discrete}
\end{align}
\end{small}
Since our simulation update will be daily and the sampling rate is once-per-day, the sampling parameter typically made explicit in sampled-data models will be $1$ and thus is not explicitly noted above.

\subsection{Asymptomatic Proportion Estimation}
Due to the difficulties in identifying and monitoring infected individuals without symptoms, explicit and unbiased information for asymptomatic-infected estimations is not always available. Applying the Next-Day Law approach proposed by Nesterov in \cite{Nesterov}, we estimate asymptomatic numbers per day, based on a latent period assumption, and further estimate the proportion $q$ of the asymptomatic subpopulation as a fraction of the total population. We note that this approach more accurately gives us the pre-symptomatic subpopulation proportion. We include the Next-Day Law here for completeness.

\begin{proposition}
\cite{Nesterov} Let $T(d)$ represent the total number of confirmed cases by day $d$, and $A(d)$ represent the number of asymptomatic infected individuals at the beginning of day $d$. Assume the latent period (that is the time from exposure to onset of symptoms) is a constant time of $\Delta$ days. Then,  $A(d+1)=T(d+\delta)-T(d), \forall d \in \mathbb{Z}$
\end{proposition}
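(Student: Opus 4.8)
The plan is to prove the identity by a direct counting (bookkeeping) argument that partitions the infected population according to exposure date and then exploits the constant latent period $\Delta$ to convert exposure dates into confirmation dates. Throughout I interpret the asymptomatic count $A(d)$ as the pre-symptomatic pool, consistent with the remark preceding the statement that this approach captures the pre-symptomatic subpopulation; in particular I assume every infected individual eventually develops symptoms and is confirmed exactly $\Delta$ days after exposure, and that $T$ is the cumulative, nondecreasing confirmed-case count.

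First I would characterize the left-hand side. An individual contributes to $A(d+1)$, the asymptomatic pool at the beginning of day $d+1$, precisely when they have already been exposed but have not yet reached symptom onset at that instant. Writing $t$ for the exposure day, the constant-latency assumption places confirmation on day $t+\Delta$; the two conditions ``already exposed'' and ``not yet confirmed'' at the start of day $d+1$ then read $t \le d$ and $t+\Delta \ge d+1$, i.e. $d+1-\Delta \le t \le d$. Hence $A(d+1)$ equals the number of individuals whose exposure day lies in the window $\{d+1-\Delta,\ldots,d\}$, a block of $\Delta$ consecutive days.

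Next I would rewrite the right-hand side in the same currency. By definition $T(d+\Delta)-T(d)$ counts the individuals whose confirmation (symptom-onset) day falls in $\{d+1,\ldots,d+\Delta\}$. Applying the latency shift in reverse, a confirmation on day $c$ corresponds to exposure on day $c-\Delta$, so this same set is exactly those exposed on days $\{d+1-\Delta,\ldots,d\}$. The two windows coincide, the map between exposure date and confirmation date is a bijection under the constant-$\Delta$ assumption, and therefore the two counts are equal, which yields $A(d+1)=T(d+\Delta)-T(d)$ as claimed (correcting the evident typographical $\delta$ for $\Delta$ in the displayed formula).

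The main obstacle I anticipate is not analytical but definitional: pinning down the half-open conventions so that ``at the beginning of day $d+1$'' and the cumulative ``by day $d$'' align with no off-by-one error across the $\Delta$-day shift. A secondary modeling subtlety is that the SAIR(S) dynamics allow a fraction of asymptomatic carriers to recover (at rate $\kappa$) without ever becoming symptomatic, which would break the clean bijection; the statement implicitly restricts attention to the pre-symptomatic interpretation in which every counted individual eventually converts, and I would flag this assumption explicitly rather than attempt to reconcile it with the recovery rate.
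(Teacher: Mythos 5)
The paper offers no proof of this proposition at all: it is imported verbatim from \cite{Nesterov} ``for completeness,'' so there is no in-paper argument to compare yours against. Your bookkeeping derivation is correct and is the standard way to establish the Next-Day Law: under a constant latent period $\Delta$ with confirmation at symptom onset, the exposure-date window $\{d+1-\Delta,\dots,d\}$ characterizing membership in $A(d+1)$ is carried bijectively onto the confirmation-date window $\{d+1,\dots,d+\Delta\}$ counted by $T(d+\Delta)-T(d)$, and the half-open conventions you chose introduce no off-by-one error. Your two caveats are also well placed and worth stating explicitly: the displayed formula's $\delta$ is indeed a typographical slip for $\Delta$ (the paper elsewhere uses $\delta$ for the loss-of-immunity rate, so the collision is genuinely confusing), and the identity is only consistent with the SAIR(S) dynamics if one reads $A$ as the \emph{pre-symptomatic} pool in which every member eventually converts --- the $\kappa$-recovery path from $A$ directly to $R$ would otherwise break the bijection, a limitation the paper itself acknowledges only implicitly when it says the method ``more accurately gives us the pre-symptomatic subpopulation proportion.''
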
\par
From the estimated daily asymptomatic population, we are able to estimate the proportions $q$ and $1-q$ of asymptomatic and symptomatic-infected subgroups. Using the first two equations in (\ref{NSAIRS-discrete}), and omitting the linear terms, we have 
\[
\frac{a_{i}^{k+1}-a_{i}^{k}}{p_{i}^{k+1}-p_{i}^{k}}\approx\frac{q}{1-q}.
\]
Hence, we can simply estimate $q$ as $\frac{a_{i}^{k+1}-a_{i}^{k}}{(a_{i}^{k+1}-a_{i}^{k})+(p_{i}^{k+1}-p_{i}^{k})}$. 

\subsection{Least squares estimation of model parameters}
%networked version, explain estimation for group model is just the fully connected network with homogeneous Infection, transition and recovery rate
With $q$ known or estimated, we can apply the simple approach first outlined in \cite{pare2018dtjournal}, and further described for SAIRS models in \cite{ARC-PBB} to estimate the model parameters $\beta_{i}, \sigma_{i}, \kappa_{i}, \gamma_{i}$, and $\delta_{i}$. We first rewrite the networked system (\ref{NSAIRS-discrete}) in matrix form. Let

\begin{align}\label{pseudoinverse}
{b} &= \left[ \begin{array}{ccccccccc}
a_{i}^{1}-a_{i}^{0}\\
\vdots\\
a_{i}^{T}-a_{i}^{T-1}\\
p_{i}^{1}-p_{i}^{0}\\
\vdots\\
p_{i}^{T}-p_{i}^{T-1}\\
r_{i}^{1}-r_{i}^{0}\\
\vdots\\
r_{i}^{T}-r_{i}^{T-1}\\
\end{array} \right], \; 
{\cal A}=\left[ \begin{array}{ccc}
\Phi_{i}\\
\Sigma_{i}\\
\Gamma_{i}
\end{array} \right], \; 
x=\left[ \begin{array}{ccccc}
\beta_{i}\\
\sigma_{i}\\
\gamma_{i}\\
\kappa_{i}\\
\delta_{i}
\end{array} \right],\notag\\
\end{align}
with
\begin{small}
\begin{align} \label{pseudo2}
\Phi_{i}&=\left[ \begin{array}{ccccc}
qs_{i}^0\sum_{j}W_{ij}( a_{j}^0+ p_{j}^0) & -a_{i}^0 & 0 & -a_{i}^0 & 0\\
\vdots & \vdots & \vdots & \vdots & \vdots\\
qs_{i}^{T-1}\sum_{j}W_{ij}( a_{j}^{T-1}+ p_{j}^{T-1}) & -a_{i}^{T-1} & 0 & -a_{i}^{T-1} & 0
\end{array} \right],\notag\\
\Sigma_i&=\left[ \begin{array}{ccccc}
(1-q)s_{i}^0\sum_{j}W_{ij}( a_{j}^0+ p_{j}^0) & a_{i}^0 & -p_{i}^0 & 0 & 0\\
\vdots & \vdots & \vdots & \vdots & \vdots\\
(1-q)s_{i}^{T-1}\sum_{j}W_{ij}( a_{j}^{T-1}+ p_{j}^{T-1}) & a_{i}^{T-1} & -p_{i}^{T-1} & 0 & 0
\end{array} \right],\notag\\
\Gamma_i&=\left[ \begin{array}{ccccc}
0 & 0 & a_{i}^0 & p_{i}^0 & -r_{i}^0\\
\vdots & \vdots & \vdots & \vdots & \vdots\\
0 & 0 & a_{i}^{T-1} & p_{i}^{T-1} & -r_{i}^{T-1},
\end{array} \right]\notag
\end{align}
\end{small}
where $s_{i}^k=1-a_{i}^k-p_{i}^k-r_{i}^k, \forall i \in [n], k \in \mathbb{Z}$.\\
Then the discrete-time networked SAIRS model can be written in the form of a system of linear equations,
\begin{equation}\label{eq}
{b} = {\cal A} x  \quad\quad \forall i \in [n]
\end{equation}\par
That is, since $q$ is assumed known, (\ref{eq}) is linear with respect to the remaining model parameters. When ${\cal A}$ is full rank, we can thus recover the parameters $\beta_{i}^*, \sigma_{i}^*, \gamma_{i}^*, \kappa_{i}^*$,  and $\delta_{i}^*$ using the pseudo-inverse in the least-squares solution to (\ref{eq}). %${\cal A}^\dagger {b}$.

\subsection{Preliminary estimation results}
We consider local COVID-19 testing-site data from Champaign County, Illinois, dating from May to August, to conduct parameter estimations for different phases of state restore plans, scheduled as:
\begin{align}
    &Phase \; 1: Rapid \; Spread \; (04/01/2020-05/01/2020)\notag\\
    &Phase \; 2: Flattening \; (05/01/2020-05/29/2020)\notag\\
    &Phase \; 3: Recovery \; (05/29/2020-06/26/2020)\notag\\
    &Phase \; 4: Revitalization \; (06/26/2020-09/26/2020)\notag
\end{align}
We assume a latent period of $\Delta=6$ days.\\
Our estimation results are given here:

{\centering 
\begin{tabular}{ |p{1.2cm}||p{0.6cm}|p{0.6cm}|p{0.6cm}|p{0.6cm}|p{0.8cm}||p{0.7cm}|}
 \hline
 Phases & $q$ & $\beta$ & $\sigma$ & $\gamma$ & $\kappa$ & $R_{0}$\\
 \hline
 Phase 2 & 0.7 & 0.06 & 0.22 & 0.15 & -0.10 & 1.004\\
 \hline
 Phase 3 & 0.6 & 0.07 & 0.15 & 0.15 & -0.05 &1.156\\
 \hline
 Phase 4 & 0.6 & 0.07 & 0.08 & 0.11 & 0.02 & 1.104\\
 \hline
\end{tabular}\par}
We note that, as the epidemic progresses, the basic reproduction number $R_{0}$ first rises, and then decreases with the implementation of consistent quarantine and other social distancing measures.\par

The prelimimary results also expose problems with real data based estimation and analysis. For example, due to the reduced availability of tests and test-sites in the early stage of the epidemic, as well as a non-random sample, the testing population presented in the data is severely skewed toward Symptomatic-Infected individuals. This hinders us from accurately capturing the true proportion of the Asymptomatic-Infected subgroup, as well as an accurate prevalence rate of infection over the total population.

In addition, our assumption of a constant latent period is not consistent with the nature of COVID-19; the latent period value we have used is an average value \cite{StephenA},\cite{Oran}. These issues lead to estimation errors, including the negative recovery rate values for $\kappa$ in Phase 1 and Phase 2. %We discuss these issues in greater detail in the following section.\\

\section{Conclusions and Future Work} \label{sec:Future}
In this paper, we have briefly reviewed classical epidemiological compartment models, with a focus on a new SAIR(S) model that emphasizes the role played by the Asymptomatic-infected subpopulation. We presented continuous-time, discrete-time, and networked versions of the SAIR(S) model, and discuss their corresponding equilibria and some stability properties. We have noted the use of Nesterov's Next-Day Law and a basic least-squares approach for model parameter estimation, and conducted initial parameter estimation for COVID-19 using publicly available data from Champaign County, Illinois. Furthermore, we completed simulations of both group and networked models, investigated the impact of isolating subpopulations, and highlighted the crucial role of the Asymptomatic subgroup in the control of epidemic evolution. \par

In the estimation process, we have encountered many challenges, most significantly biased testing data and the lack of explicit information on the asymptomatic infected population. %We discussed these effects, and proposed possible approaches to improve the situation. \par
Our ongoing efforts include pursuing more complete stability and equilibria analyses for the N-SAIR(S) model and investigating approaches for model estimation under non-random and missing sample data sets, for example as described in \cite{CopasLi97}. We are further investigating Bayesian statistical methods for estimating true prevalence of epidemics under biased information on apparent prevalence.

\section*{Acknowledgments}
The authors would like to thank Dr. Joseph Kim, M.D., Ph.D., for many useful and interesting discussions, and for providing informative references. 
This work has been supported in part by Jump-ARCHES endowment through the Health Center for Engineering Systems Center at the University of Illinois, Urbana-Champaign, the C3.ai Digital Transformation Institute, and the NSF grant ECCS-2032321.

\bibliographystyle{IEEEtran}
\bibliography{SAIRS}
%\begin{bibliography}{}
%\bibitem{Par\Acute{e}20}
%Philip E. Par\Acute{e}, Carolyn L. Beck, Tamer Basar, 2020. Modeling, Estimation, and Analysis of Epidemics over Networks: An Overview. 
%\bibitem{Par\Acute{e}20}
%Philip E. Par\Acute{e}, Carolyn L. Beck, Tamer Basar, 2020. Modeling, Estimation, and Analysis of Epidemics over Networks: An Overview.

%\end{bibliography}

\end{document}